\newtheorem{prop}{Proposition}
\title{Stability analysis and control of decision-making of miners in blockchain}
\author[1]{Kosuke Toda}
\author[1]{Naomi Kuze}
\author[1]{Toshimitsu Ushio}
\affil[1]{Graduate School of Engineering Science, Osaka University, Machikaneyama 1-3, Toyonaka-shi, Osaka, 560--8531, Japan.}
\date{}
\begin{document}
\maketitle
\begin{abstract}
  To maintain blockchain-based services with ensuring its security, it is an important issue how to decide a mining reward so that the number of miners participating in the mining increases.
  We propose a dynamical model of decision-making for miners using an evolutionary game approach and analyze the stability of equilibrium points of the proposed model.
  The proposed model is described by the 1st-order differential equation.
  So, it is simple but its theoretical analysis gives an insight into the characteristics of the decision-making.
  Through the analysis of the equilibrium points, we show the transcritical bifurcations and hysteresis phenomena of the equilibrium points.
  We also design a controller that determines the mining reward based on the number of participating miners to stabilize the state where all miners participate in the mining.
  Numerical simulation shows that there is a trade-off in the choice of the design parameters.
\end{abstract} \hspace{10pt}

\begin{keywords}
  Blockchain, Proof-of-work, Decision-making, Evolutionary Game, Bifurcation, Hysteresis, Feedback control.
\end{keywords}

\section{Introduction}
Blockchain is a distributed ledger technology for recording transactions that underlies various fields such as digital currency like Bitcoin~\cite{nakamoto2008bitcoin}, data sharing~\cite{Xia2017}, and computer security~\cite{Ouaddah2016}.
Blockchain-based services use cryptography to record transactions as a chain of blocks.
A block consists of a block header and transaction data.
The block header contains a cryptographic hash of its previous block, which makes blockchain-based services resistant to tampering.
In these services, participants called \textit{miners} create blocks in a distributed manner, and the longest chain of blocks is considered to be legitimate.
When a miner succeeds in creating a block, he/she gets a reward called a \textit{mining reward}.

Blockchain-based services approve transactions through a consensus algorithm.
As a consensus algorithm, proof-of-work (PoW) is typically used.
In this algorithm, the mining difficulty is set using a scalar value called a \textit{nonce} in the block header.
To create a block, miners must find a nonce such that the cryptographic hash value for the previous block satisfies specific conditions.
The process of creating blocks is called a \textit{mining}.
In general, a cryptographic hash value for a block is unique according to the nonce contained in the block.
Moreover, a nonce that satisfies the specific conditions cannot be calculated directly.
As a result, an exhaustive search imposes a large computational cost on miners, which contributes to the resistance to tampering.
Because transaction approvals depend on miner calculations (such calculations are very costly and require a lot of energy~\cite{TRUBY2018399,Cambridge}),
the participation of many miners is needed to maintain blockchain-based services and ensure blockchain system security~\cite{Cai2018,liu2021social}.
Therefore, it is important to analyze the decision-making problem of whether miners participate in the mining according to the energy consumption and mining rewards.

Game theory is used to analyze interactions among rational decision-makers.
Many studies have adopted game theory to analyze blockchain-related issues with PoW~\cite{Liu2019}, such as decision-making problems in the mining considering the energy consumption~\cite{Dimitri2017,Fiat2019}.
Evolutionary game theory has been used as a powerful mathematical tool for analyzing dynamical models of evolutionary selection~\cite{weibull1997evolutionary}.
Dynamical characteristics of the selection process are modeled by \textit{replicator dynamics}.
Control methods for the replicator dynamics have been studied in~\cite{Kanazawa2008,Kanazawa2009,Morimoto2016}.
Evolutionary game models and replicator dynamics are also used in analyzing blockchain-related issues such as mining pool selection problems~\cite{Liu2017,Fujita2020}, and attack scenarios~\cite{Kim2019}.

We previously focused on a decision-making problem of whether miners participate in the mining according to the energy consumption and the mining rewards, and modeled it as a non-cooperative game.
Through theoretical and numerical analysis, we showed the property of Nash equilibria~\cite{Toda2021}.
However, in this study, we assumed that, once miners choose a strategy~(i.e., participation in the mining or not), they do not change their strategies.
Practically, the miners may decide to participate in the mining dynamically based on their current earned mining rewards.
It is important to analyze such a dynamical decision-making process.

In this paper, we propose a dynamical model of the decision-making problem for miners, by applying an evolutionary game approach.
We analyze the stability of its equilibrium points and show the existence of transcritical bifurcations and hysteresis phenomena with the coexistence of two asymptotically stable equilibrium points: one corresponds to the state where all miners participate in the mining and the other to the state where the number of participating miners is minimum.
The former equilibrium point is preferable to maintain blockchain-based services.
We propose a controller that determines the mining reward based on the number of current participating miners so as to stabilize the equilibrium point if at least one miner participates in the initial time.

The remainder of this paper is organized as follows.
In Section~\ref{sec:formulation}, we propose an evolutionary game-based dynamical model of the decision-making process.
In Section~\ref{sec:analysis}, we analyze the stability of its equilibrium point.
In Section~\ref{sec:stabilize}, we design a state feedback controller to let all miners participate in the mining.

\section{Dynamical model of decision-making}\label{sec:formulation}
We assume that miners in a blockchain network are partitioned into two sets $\mathcal{M}$ and $\mathcal{N}$, where miners in $\mathcal{M}$ always participate in the mining
and those in $\mathcal{N}$ have two strategies, participating in the mining (strategy $s_k = 1$) and not participating in the mining (strategy $s_k = 0$), where $k \in \mathcal{N}$.
Note that $\mathcal{M} \cap \mathcal{N} = \emptyset$. Denoted by $m$ and $n$ are the cardinalities of $\mathcal{M}$ and $\mathcal{N}$, respectively (we assume $m \geq 1$ and $n \geq 1$).
We define $x_0$ and $x_1$ as the ratios of miners in $\mathcal{N}$ that choose strategies $0$ and $1$, respectively.
Note that
\begin{align}
  x_0 + x_1 = 1. \label{eq:relation_x0x1}
\end{align}

Miners need to find a nonce such that the first $h$ bits of the hash of the block are all $0$.
Then, $D = 2^h$ is the difficulty parameter and $1/D$ is the probability that a miner creates a block with one hash calculation~\cite{Debus2017}.
When miner $k \in \mathcal{M} \cup \mathcal{N}$ participates in the mining, he/she needs a cost $c$ per unit operating time.
The average number $w_k = f_k(c)$ of hash queries calculated per unit operating time by miner $k$ depends on the cost $c$, and we assume that $f_k(c)$ is the same for all miners.
In this paper, for simplicity, we assume $f_k(c) = vc \, (v > 0)$ for any $k \in \mathcal{M} \cup \mathcal{N}$.

The mining of blocks can be described as a Poisson process~\cite{Houy2016,Kraft2016}.
That is, the block creation time is exponentially distributed~\cite{1547-5816_2019_1_365}.
The rate $\lambda_k$ of the Poisson process of miner $k \in \mathcal{M} \cup \mathcal{N}$ is given by $\lambda_k = w_k/D$~\cite{Kraft2016}\footnote{
Note that a combination of independent Poisson processes is still a Poisson process. Thus, the rate of the Poisson process of all miners is written as $\sum_{i \in \mathcal{M} \cup \mathcal{N}} \lambda_i$.
}.
If miner $k$ chooses $s_k= 1$, then the rate of the Poisson process is
$\lambda_k = s_k f_k(c)/D = s_kc/d$ (we define $d = D/v$, in this paper).
Let $R$ be the mining reward.
Based on the previous work~\cite{Toda2021}, the expected reward $R_k$ and the expected cost $CS_k$ for the mining of miner $k$ are calculated as follows.
\begin{align}
  &R_k = \frac{\lambda_k}{\sum_{i \in \mathcal{M} \cup \mathcal{N}} \lambda_i} R = \frac{Rs_k}{m + nx_1}, \label{eq:expected_reward} \\
  &CS_k = \frac{c \lambda_k}{(\sum_{i \in \mathcal{M} \cup \mathcal{N}} \lambda_i)^2} = \frac{ds_k}{(m + nx_1)^2}. \label{eq:expected_cost}
\end{align}

We define the utility function $u_i(x_0, x_1)$ of miners that choose the strategy $i \in \{0, \ 1\}$ as
\begin{align}
  u_i(x_0, x_1) = \begin{cases}
    0 & \mbox{if} \; \; i = 0, \\
    \frac{1}{m + nx_1} \left(R - \frac{d}{m + nx_1}\right) & \mbox{if} \; \; i = 1,
\end{cases} \label{eq:utility}
\end{align}
which means that the utility of a miner who participates in the mining is the difference between the expected reward $R_k$ and the expected cost $CS_k$.
Based on the principle of the evolutionary game~\cite{weibull1997evolutionary}, the dynamics of the ratio of miners that choose the strategy $i$ is given by
\begin{align}
  \frac{\dot{x}_i}{x_i} = u_i(x_0, x_1) - \bar{u}(x_0, x_1) \; (i = 0, 1), \label{eq:replicator}
\end{align}
where $\bar{u}(x_0, x_1) = \sum_{i = 0}^1 x_i u_i(x_0, x_1)$ is the average utility of all miners. According to \eqref{eq:utility}, \eqref{eq:replicator} is rewritten as
\begin{align}
  \dot{x}_1 = -\dot{x}_0 = \frac{x_1(1 - x_1)}{m + nx_1} \left(R - \frac{d}{m + nx_1}\right) \eqqcolon \varphi_R(x_1). \label{eq:dynamics_x1}
\end{align}

Thus, the dynamics of the decision-making of miners is described by the above 1st-order differential equation and the reward $R$ plays an important role in the decision-making of the miners for the participation in the mining.
In the following, we investigate stability and stabilization of equilibrium points of~\eqref{eq:dynamics_x1}.
For that purpose, the concept of a \textit{basin of attraction}~\cite{khalil2002nonlinear} is important.
Let $\xi(t; x_1^{\mathrm{init}})$ be the solution of \eqref{eq:dynamics_x1} that starts from an initial state $x_1^{\mathrm{init}}$ at time $t = 0$.
For a given asymptotically stable equilibrium point $x_1^{\prime}$ of \eqref{eq:dynamics_x1}, the basin of attraction is defined as the set of all initial states $x_1^{\mathrm{init}}$
such that $\xi(t; x_1^{\mathrm{init}})$ is defined for all $t \geq 0$ and $\lim_{t \to \infty} \xi(t; x_1^{\mathrm{init}}) = x_1^{\prime}$.

\section{Stability analysis}\label{sec:analysis}
In this section, we investigate the stability of the equilibrium point $x_1 = 0, \ 1, \ x_1^{*}$ of \eqref{eq:dynamics_x1}, where
\begin{align}
  x_1^{*} = \frac{1}{n} \left(\frac{d}{R} - m\right). \label{eq:xstr}
\end{align}
When $1/(m + n) < R/d < 1/m$, the equilibrium point $x_1^{*}$ satisfies $0 < x_1^{*} < 1$. This equilibrium point is the state where the utility $u_1(1 - x_1^{*}, x_1^{*})$ is equal to $0$, that is, the utility for the strategy $0$ is equal to that for the strategy $1$.

We investigate the local stability of the three equilibrium points.
The derivative of $\varphi_R(x_1)$ with respect to $x_1$ is
\begin{align}
  &\frac{\partial \varphi_R(x_1)}{\partial x_1} = \left(-\frac{x_1}{m + nx_1} + \frac{1 - x_1}{m + nx_1} - \frac{nx_1(1 - x_1)}{(m + nx_1)^2}\right) \nonumber \\
  &\hspace{20mm}\times \left(R - \frac{d}{m + nx_1}\right) + \frac{dnx_1(1 - x_1)}{(m + nx_1)^3}. \label{eq:partial_phi}
\end{align}
Thus, we obtain
\begin{align}
  &\left.\frac{\partial \varphi_R(x_1)}{\partial x_1}\right|_{x_1 = 0} = \frac{1}{m} \left(R - \frac{d}{m}\right), \label{eq:dphi_0} \\
  &\left.\frac{\partial \varphi_R(x_1)}{\partial x_1}\right|_{x_1 = 1} = -\frac{1}{m + n} \left(R - \frac{d}{m + n}\right), \label{eq:dphi_1} \\
  &\left.\frac{\partial \varphi_R(x_1)}{\partial x_1}\right|_{x_1 = x_1^{*}} = \frac{R^3}{nd^2} \left(\frac{d}{R} - m\right) \left((m + n) - \frac{d}{R}\right). \label{eq:dphi_xstr}
\end{align}
\begin{table}[t]
  \centering
  \caption{The relation between $R/d$ and the stability of equilibrium points.}
  \begin{tabular}{|c|c|c|c|}\hline
    Condition for $R/d$ & $x_1 = 0$ & $x_1 = 1$ & $x_1 = x_1^{*}$ \\ \hline
    $R/d < 1 / (m + n)$ & S & U & S \\
    $1/(m + n) < R/d < 1 / m$ & S & S & U \\
    $R/d > 1 / m$ & U & S & S \\\hline
  \end{tabular}
  \label{tab:stability}
\end{table}
Thus, we have their stability conditions as shown in Table~\ref{tab:stability},
where S (\textit{resp.} U) represents an asymptotically stable (\textit{resp.} unstable) point.

\begin{figure}[t]
  \centering
  \includegraphics[clip, width = 7.5cm]{./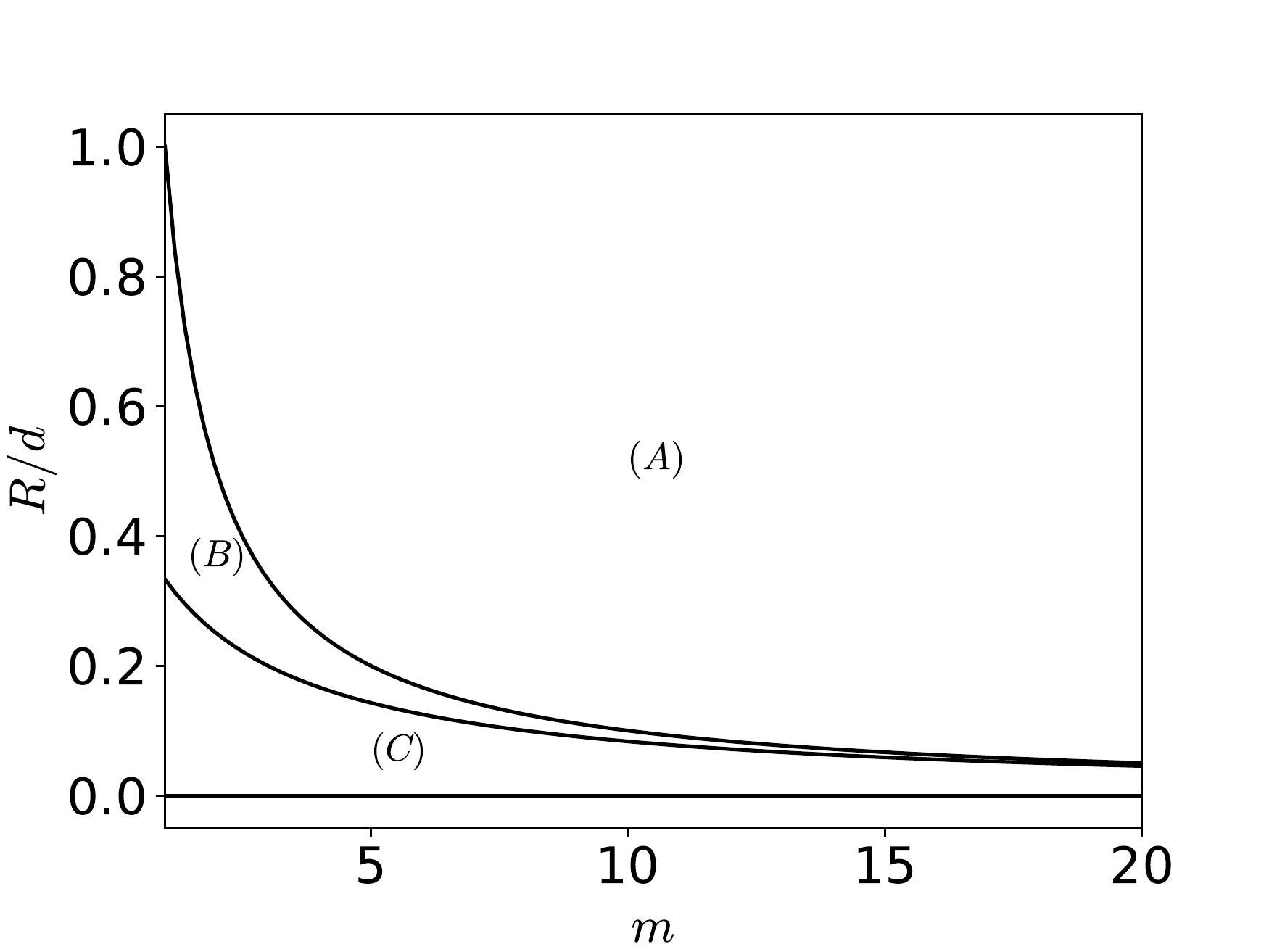}
  \caption{The $m-R/d$ parameter plane where $n$ is fixed to $n = 2$.}
  \label{fig:param_plane}
\end{figure}

Fig.~\ref{fig:param_plane} shows the $m-R/d$ parameter plane with $n = 2$ where the meaning of each region is as follows.
In the region $(A)$, both $x_1 = 1$ and $x_1^{*} < 0$ are asymptotically stable equilibrium points, and the basin of attraction of $x_1 = 1$ is $(0, \ \infty)$, that is, every solution of \eqref{eq:dynamics_x1} starting in $(0, 1]$ converges to $1$.
In the region $(B)$, both $x_1 = 0$ and $x_1 = 1$ are asymptotically stable equilibrium points, and basins of attraction of $x_1 = 0$ and $x_1 = 1$ are $(-\infty, \ x_1^{*})$ and $(x_1^{*}, \ \infty)$, respectively,
that is, every solution of \eqref{eq:dynamics_x1} starting in $[0, \ x_1^{*})$ converges to $0$, and every solution of \eqref{eq:dynamics_x1} starting in $(x_1^{*}, \ 1]$ converges to $1$.
In the region $(C)$, both $x_1 = 0$ and $x_1^{*} > 0$ are asymptotically stable equilibrium points, and the basin of attraction of $x_1 = 0$ is $(-\infty, \ 1)$, that is, every solution of \eqref{eq:dynamics_x1} starting in $[0, \ 1)$ converges to $0$.

\begin{figure}[t]
  \centering
  \includegraphics[clip, width = 7.5cm]{./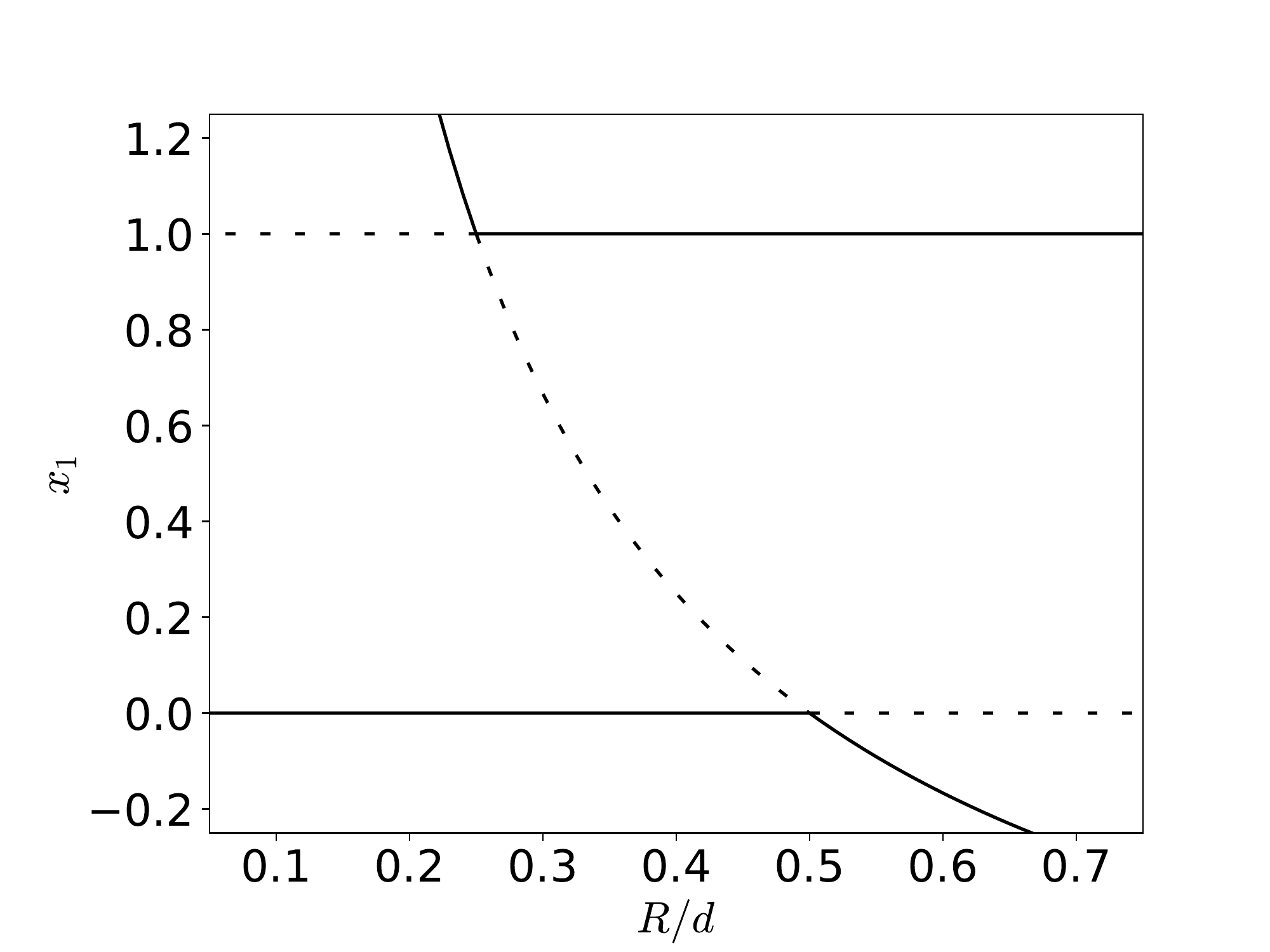}
  \caption{The relation between $R/d$ and the stability of equilibrium points when $m = n = 2$.}
  \label{fig:bif}
\end{figure}

Shown in Fig.~\ref{fig:bif} is a bifurcation diagram with respect to the bifurcation parameter $R/d$, where $m = n = 2$.
The solid (\textit{resp.} dashed) line represents an asymptotically stable (\textit{resp.} unstable) equilibrium point.
Two curves of equilibrium points pass through $(x_1, R) = (1, d/(m + n))$ (\textit{resp.} $(x_1, R) = (0, d/m)$), one given by $x_1 = x_1^{*}$, the other by $x_1 = 1$ (\textit{resp.} $x_1 = 0$).
Both curves exist on both sides of $R = d/(m + n)$ (\textit{resp.} $R = d/m$).
The stability along each curve exchanges on passing through $R = d/(m + n)$ (\textit{resp.} $R = d/m$).
Thus, the exchange of stability (known as a \textit{transcritical bifurcation})~\cite{wiggins2003intro} is observed when $(x_1, R) = (0, d/m), (1, d/(m + n))$.
We show in~\ref{sec:Appendix_A} that the vector field \eqref{eq:dynamics_x1} satisfies the condition of the transcritical bifurcation shown in~\cite{wiggins2003intro}.

Since the values of $x_i \, (i = 0, 1)$ satisfy $0 \leq x_i \leq 1$, we observe jump phenomena owing to these transcritical bifurcations.
Moreover, $R > d/m$ needs to be satisfied so that all miners in $\mathcal{N}$ participate in the mining.
It is noted that, once the miners participate in the mining, they continue to participate in the mining until the reward $R$ becomes $d/(m + n)$.
Thus, a hysteresis phenomenon of the equilibrium points is observed.

\begin{figure}[t]
  \centering
  \includegraphics[clip, width = 7.5cm]{./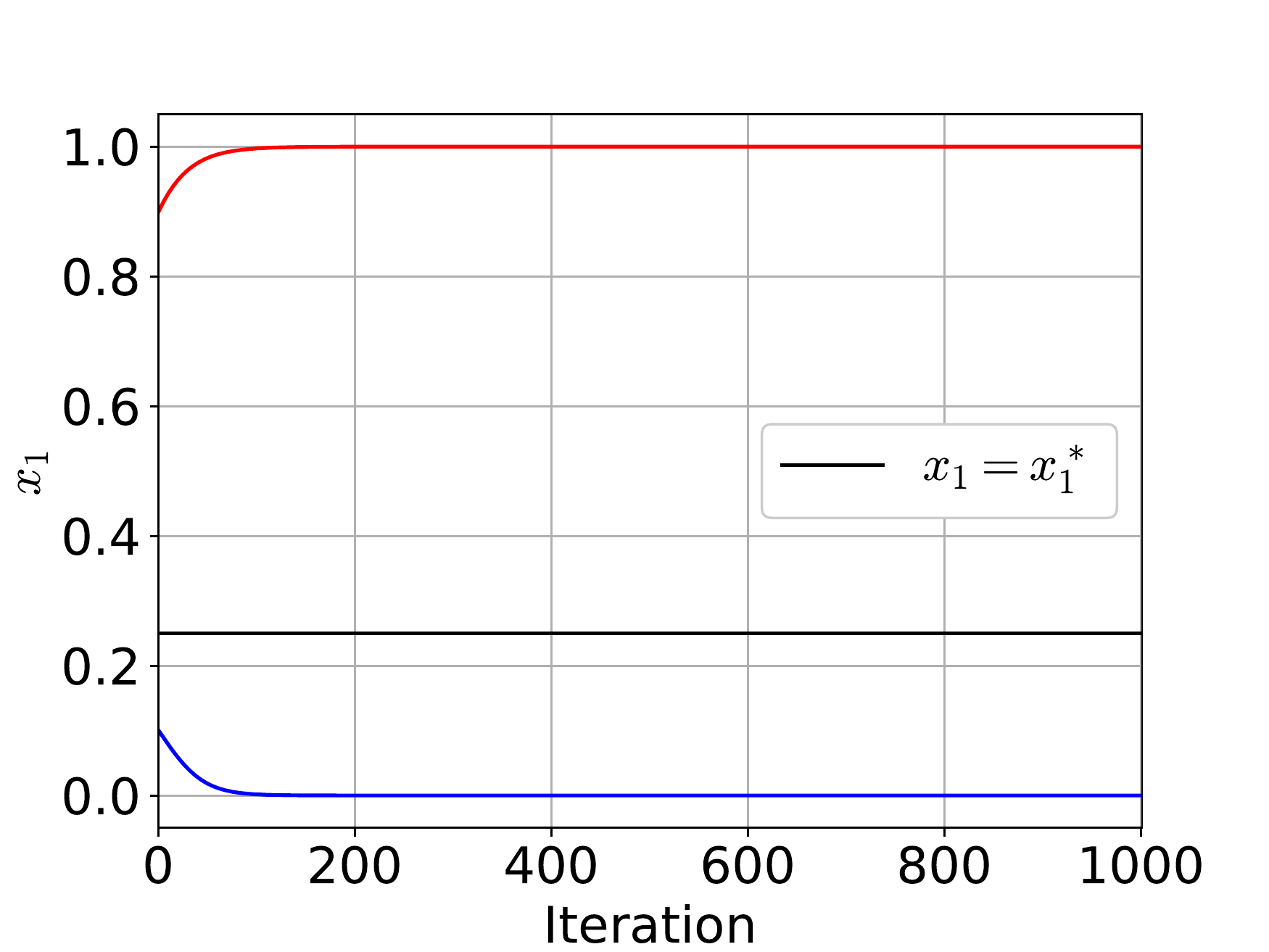}
  \caption{Trajectories of \eqref{eq:dynamics_x1} when $m = n = 2$, $d = 100$, and $R = 40$, from $x_1^{\mathrm{init}} = 0.1$ (blue) and $x_1^{\mathrm{init}} = 0.9$ (red).}
  \label{fig:time_evolution_dynamics}
\end{figure}

Fig.~\ref{fig:time_evolution_dynamics} shows trajectories of \eqref{eq:dynamics_x1} from an initial state $x_1^{\mathrm{init}} = 0.1$ (blue) and $x_1^{\mathrm{init}} = 0.9$ (red).
When $d/(m + n) < R < d/m$, both $x_1 = 0$ and $x_1 = 1$ are asymptotically stable points whose basins of attraction are $(-\infty, \ x_1^{*})$ and $(x_1^{*}, \ \infty)$, respectively.
Thus, the number of miners who participate in the mining converges to $0$ if the initial ratio is less than $x_1^{*}$ because their utility is negative and they prefer non-participation.

\section{Stabilization}\label{sec:stabilize}
The result in Section~\ref{sec:analysis} implies that no miner in $\mathcal{N}$ participates in the mining in the steady state when the mining reward $R^{*}$ satisfies $R^{*} < d/(m + n)$.
When the mining reward $R^{*}$ satisfies $d/(m + n) < R^{*} < d/m$, miners' behaviors depend on the initial state $x_1^{\mathrm{init}}$, i.e., no miner in $\mathcal{N}$ participates in the mining in the steady state when $x_1^{\mathrm{init}} < x_1^{*}$.
We propose a state feedback controller to adjust the reward based on the ratio $x_1$ so that all miners in $\mathcal{N}$ participate in the mining, i.e., let every trajectory of $x_1$ with its initial state in $(0, \ 1]$ converge to $1$.

\subsection{Case where $R^{*} < d/(m + n)$}
First, we show that $x_1 = 1$ cannot be stabilized when $R^{*} < d/(m + n)$.
We consider the following state feedback controller $R_1(x_1)$ that adjusts the reward based on the ratio $x_1$.
\begin{align}
  R = R_1(x_1), \; R_1(1) = R^{*} < \frac{d}{m + n}. \label{eq:controller_1}
\end{align}
The controlled trajectory of $x_1$ by \eqref{eq:controller_1} is described by
\begin{align}
  \dot{x}_1 = \frac{x_1(1 - x_1)}{m + nx_1} \left(R_1(x_1) - \frac{d}{m + n}\right) \eqqcolon \psi_R(x_1). \label{eq:dynamics_control_1}
\end{align}
The derivative of $\psi_R(x_1)$ with respect to $x_1$ is
\begin{align}
  &\frac{\partial \psi_R(x_1)}{\partial x_1} = \left(-\frac{x_1}{m + nx_1} + \frac{1 - x_1}{m + nx_1} - \frac{nx_1(1 - x_1)}{(m + nx_1)^2}\right) \nonumber \\
  &\hspace{15mm}\times \left(R_1(x_1) - \frac{d}{m + nx_1}\right) \nonumber \\
  &\hspace{17mm}+ \frac{x_1(1 - x_1)}{m + nx_1} \left(\frac{\partial R_1(x_1)}{\partial x_1} + \frac{dn}{(m + nx_1)^2}\right). \label{eq:partial_psi}
\end{align}
We obtain
\begin{align}
  \left.\frac{\partial \psi_R(x_1)}{\partial x_1}\right|_{x_1 = 1} = -\frac{1}{m + n} \left(R_1(1) - \frac{d}{m + n}\right) > 0.
\end{align}
Therefore, the unstable equilibrium point $x_1 = 1$ cannot be stabilized even if the feedback controller is used.

\subsection{Case where $d/(m + n) < R^{*} < d/m$}
Next, we show that $x_1 = 1$ can be an asymptotically stable equilibrium point whose basin of attraction is $(0, \ 1]$ with a state feedback controller.
We introduce the following state feedback controller $R_2(x_1)$ to adjust the reward based on the ratio $x_1$,
\begin{align}
  R = R_2(x_1) = R^{*} + \Delta R(x_1), \; \Delta R(1) = 0, \label{eq:controller_2}
\end{align}
and let every trajectory of $x_1$ with its initial state in $(0, \ 1]$ converge to $1$.

\subsubsection{The condition of the feedback gain}
We give $\bar{x}_1$ satisfying $x_1^{*} < \bar{x}_1 \leq 1$ and $\varepsilon > 0$.
For a given reward $R^{*} \in (d/(m + n), d/m)$, let $\Delta R(x_1)$ be
\begin{align}
  \Delta R(x_1) = \begin{cases}
    K(\bar{x}_1 - x_1) & \mbox{if} \; \; x_1 < x_1^{*} + \varepsilon, \\
    0 & \mbox{otherwise},
\end{cases}\label{eq:controller_delta_2}
\end{align}
where $K > 0$ is a feedback gain.
We obtain a condition for the gain $K$ and $\varepsilon$ such that every trajectory of $x_1$ with its initial state in $(0, \ 1]$ converges to $1$ as in Proposition~\ref{prop:controller}.
\begin{prop}\label{prop:controller}
  Assume $d/(m + n) < R^{*} < d/m$.
  Let $\zeta_R(x_1)$ be
  \begin{align}
    \zeta_R(x_1) &\coloneqq -Knx_1^2 + (Kn \bar{x}_1 - Km + R^{*}n)x_1 \nonumber \\
     &\hspace{25mm}+ (R^{*}m + Km\bar{x}_1 - d), \label{eq:zeta_x1}
  \end{align}
  and let $\alpha, \beta \; (\alpha < \beta)$ be real solutions of the quadratic equation $\zeta_R(x_1) = 0$.
  Then, every trajectory of $x_1$ with its initial state in $(0, \ 1]$ converges to $1$ if the gain $K$ and $\varepsilon$ satisfy
  \begin{align}
    &K > \frac{d - R^{*}m}{m \bar{x}_1} \ (> 0), \label{eq:cond_gainK} \\
    &0 < \varepsilon \begin{cases}
     < \beta - x_1^{*} & {\rm if} \; \; \beta < 1, \\
     \leq 1 - x_1^{*} & {\rm if} \; \; \beta \geq 1.
  \end{cases} \label{eq:cond_eps}
  \end{align}
\end{prop}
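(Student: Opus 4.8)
\emph{Proof idea.} The plan is to show that, under \eqref{eq:cond_gainK} and \eqref{eq:cond_eps}, the controlled right-hand side is strictly positive on all of $(0,1)$ and vanishes at $x_1=1$; a monotonicity argument then forces every trajectory started in $(0,\ 1]$ to increase to $1$.

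First I would treat the controlled region $x_1<x_1^{*}+\varepsilon$. Inserting $R=R^{*}+K(\bar x_1-x_1)$ into \eqref{eq:dynamics_x1} and multiplying the bracketed term by $m+nx_1>0$, a direct expansion shows that, for $x_1\in(0,1)$, the sign of $\dot x_1$ equals the sign of $\zeta_R(x_1)$ in \eqref{eq:zeta_x1}. Since $\zeta_R$ is a concave parabola (leading coefficient $-Kn<0$), condition \eqref{eq:cond_gainK} is exactly $\zeta_R(0)=m(R^{*}+K\bar x_1)-d>0$; hence $\zeta_R$ attains a positive value, so it has two real roots $\alpha<\beta$ with $\alpha<0<\beta$ and $\zeta_R>0$ precisely on $(\alpha,\beta)$. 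Using \eqref{eq:cond_eps} I would then check $x_1^{*}+\varepsilon\le\beta$ and $x_1^{*}+\varepsilon\le 1$ (the first inequality being strict when $\beta<1$), so that $(0,\ x_1^{*}+\varepsilon)\subseteq(\alpha,\beta)$ and therefore $\dot x_1>0$ on the controlled part of $(0,1)$.

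Next I would handle the complementary region $x_1\ge x_1^{*}+\varepsilon$ (nonempty precisely when $x_1^{*}+\varepsilon<1$). There $\Delta R\equiv 0$, so $\dot x_1=\varphi_{R^{*}}(x_1)$; since $d/(m+n)<R^{*}<d/m$ puts us in the middle row of Table~\ref{tab:stability} ($x_1^{*}\in(0,1)$ unstable, $x_1=1$ asymptotically stable), $\varphi_{R^{*}}>0$ on $(x_1^{*},1)$ and in particular on $[x_1^{*}+\varepsilon,\ 1)$. Combining the two regions, $\dot x_1>0$ for every $x_1\in(0,1)$, while $\dot x_1=0$ at $x_1=1$ and $\varphi_{R^{*}}<0$ for $x_1>1$; thus $(0,\ 1]$ is forward invariant. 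The controlled vector field is locally Lipschitz near $(0,\infty)$ (its only pole $x_1=-m/n$ is negative, and the switching line leaves it piecewise-$C^1$ with matching one-sided values), so solutions are unique and, being bounded, global. A solution from $x_1^{\mathrm{init}}\in(0,\ 1]$ therefore stays in $(0,\ 1]$, is nondecreasing, and converges to its supremum, which must be an equilibrium in $(0,\ 1]$; the only such equilibrium is $x_1=1$, which proves the proposition.

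The substantive work is entirely bookkeeping on the quadratic $\zeta_R$: verifying that \eqref{eq:cond_gainK} really forces $\alpha<0<\beta$ with both roots real, and disentangling the sub-cases $\beta<1$, $\beta\ge 1$, and $x_1^{*}+\varepsilon=1$ so that strict positivity of $\dot x_1$ is not lost at the switching point $x_1=x_1^{*}+\varepsilon$ or as $x_1\uparrow 1$. The rest (forward invariance, existence and uniqueness, monotone convergence to the unique interior equilibrium) is routine.
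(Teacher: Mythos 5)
Your proposal is correct and takes essentially the same route as the paper's proof: the same factorization of the controlled dynamics as $\dot{x}_1 = x_1(1-x_1)\zeta_R(x_1)/(m+nx_1)^2$, reading \eqref{eq:cond_gainK} as exactly $\zeta_R(0)>0$ so that the concave quadratic has roots $\alpha<0<\beta$, using \eqref{eq:cond_eps} to place the switching point $x_1^{*}+\varepsilon$ inside $(\alpha,\beta)\cap(0,1]$, and invoking the uncontrolled dynamics with $d/(m+n)<R^{*}<d/m$ to get $\dot{x}_1>0$ on $[x_1^{*}+\varepsilon,1)$; you merely streamline the paper's extra evaluation of $\zeta_R(x_1^{*})$ and make explicit the routine monotone-convergence step the paper leaves implicit. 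One parenthetical remark is inaccurate: the one-sided values of the vector field at $x_1=x_1^{*}+\varepsilon$ do not match in general (the paper notes continuity only when $\varepsilon=\bar{x}_1-x_1^{*}$), but since both one-sided values are strictly positive there, trajectories cross the switch and your argument is unaffected.
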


\begin{proof}
  With the controller~\eqref{eq:controller_2} and \eqref{eq:controller_delta_2}, the dynamics of $x_1$ ($x_1 < x_1^{*} + \varepsilon$) is given by
  \begin{align}
    &\dot{x}_1 = \eta_R(x_1), \label{eq:dynamics_control} \\
    &\eta_R(x_1) \coloneqq \frac{x_1(1 - x_1)}{m + nx_1} \left(R^{*} + K(\bar{x}_1 - x_1) - \frac{d}{m + nx_1}\right). \label{eq:eta}
  \end{align}
  According to \eqref{eq:zeta_x1}, $\eta_R(x_1)$ can be rewritten as
  \begin{align}
    \eta_R(x_1) = \frac{x_1(1 - x_1)\zeta_R(x_1)}{(m + nx_1)^2}. \label{eq:eta_rewrite}
  \end{align}

  First, we prove that the quadratic equation $\zeta_R(x_1) = 0$ has two distinct real solutions under \eqref{eq:cond_gainK}.
  We have
  \begin{align}
    &\zeta_R(x_1^{*}) = K(\bar{x}_1 - x_1^{*})(m + nx_1) > 0, \label{eq:zeta_x1str}
  \end{align}
  which implies with \eqref{eq:cond_gainK} that the quadratic equation $\zeta_R(x_1) = 0$ has two distinct real solutions $\alpha, \beta$ satisfying $\alpha < x_1^{*} < \beta$.

  Next, we prove $\alpha < 0$ under \eqref{eq:cond_gainK}.
  We obtain
  \begin{align}
    \zeta_R(0) &= m\bar{x}_1 K - (d - R^{*}m) \nonumber \\
    &> m\bar{x}_1 \frac{d - R^{*}m}{m\bar{x}_1} - (d - R^{*}m) = 0, \label{eq:zeta_0}
  \end{align}
  from \eqref{eq:zeta_x1} and \eqref{eq:cond_gainK}.
  Since $\zeta_R(x_1)$ is a convex upward quadratic function, the smaller solution $\alpha$ of $\zeta_R(x_1) = 0$ satisfies $\alpha < 0$.

  Finally, we prove that the system controlled by \eqref{eq:controller_2} satisfies $\dot{x}_1 > 0$ for any $x_1 \in (0, \ 1)$ under \eqref{eq:cond_gainK} and \eqref{eq:cond_eps}.
  When $\beta < 1$, $\eta_R(x_1) > 0$ for any $x_1 \in (0, \ \beta)$ from \eqref{eq:eta_rewrite}.
  It is obvious that $\dot{x}_1 > 0$ for any $x_1 \in (0, x_1^{*} + \varepsilon)$ from \eqref{eq:cond_eps}.
  For any $x_1 \in [x_1^{*} + \varepsilon, 1)$, $\dot{x}_1 > 0$ because $K = 0$.
  Thus,
  \begin{align}
    \dot{x}_1 > 0 \; \mbox{for} \; \mbox{any} \;  x_1 \in (0, \; 1). \label{eq:dx1positive}
  \end{align}
  Similary, it is also shown by \eqref{eq:cond_eps} that \eqref{eq:dx1positive} holds for any $\beta \geq 1$.
  Therefore, every trajectory of $x_1$ with its initial state in $(0, \ 1]$ converges to $1$ under \eqref{eq:cond_gainK} and \eqref{eq:cond_eps}.
\end{proof}

It is noted that $\bar{x}_1<\beta$ since $d/(m+n)<R^*$.  So, \eqref{eq:controller_delta_2} is continuous if $\varepsilon=\bar{x}_1-x_{1}^{*}$.

\subsubsection{Performance evaluation}
In this section, we provide the numerical analysis of the controller.
We consider the case where $m = n = 2$, $d = 100$, and $R^{*} = 40$.
Then, we have $x_1^{*} = 0.25$ from \eqref{eq:xstr}.
Let the initial state of $x_1$ be $x_1^{\mathrm{init}} = 0.1$.
We consider the following two cases where $K$ and $\varepsilon$ satisfy \eqref{eq:cond_gainK} and \eqref{eq:cond_eps}.
\begin{description}
  \item[Case 1)] $\bar{x}_1 = 0.26, \; \varepsilon = 0.005, \; K = 56.8125$.
  \item[Case 2)] $\bar{x}_1 = 1, \; \varepsilon = 0.75, \; K = 10.1$.
\end{description}

\begin{figure}[t]
  \begin{minipage}[b]{0.99\linewidth}
    \centering
    \includegraphics[clip, width = 7.5cm]{./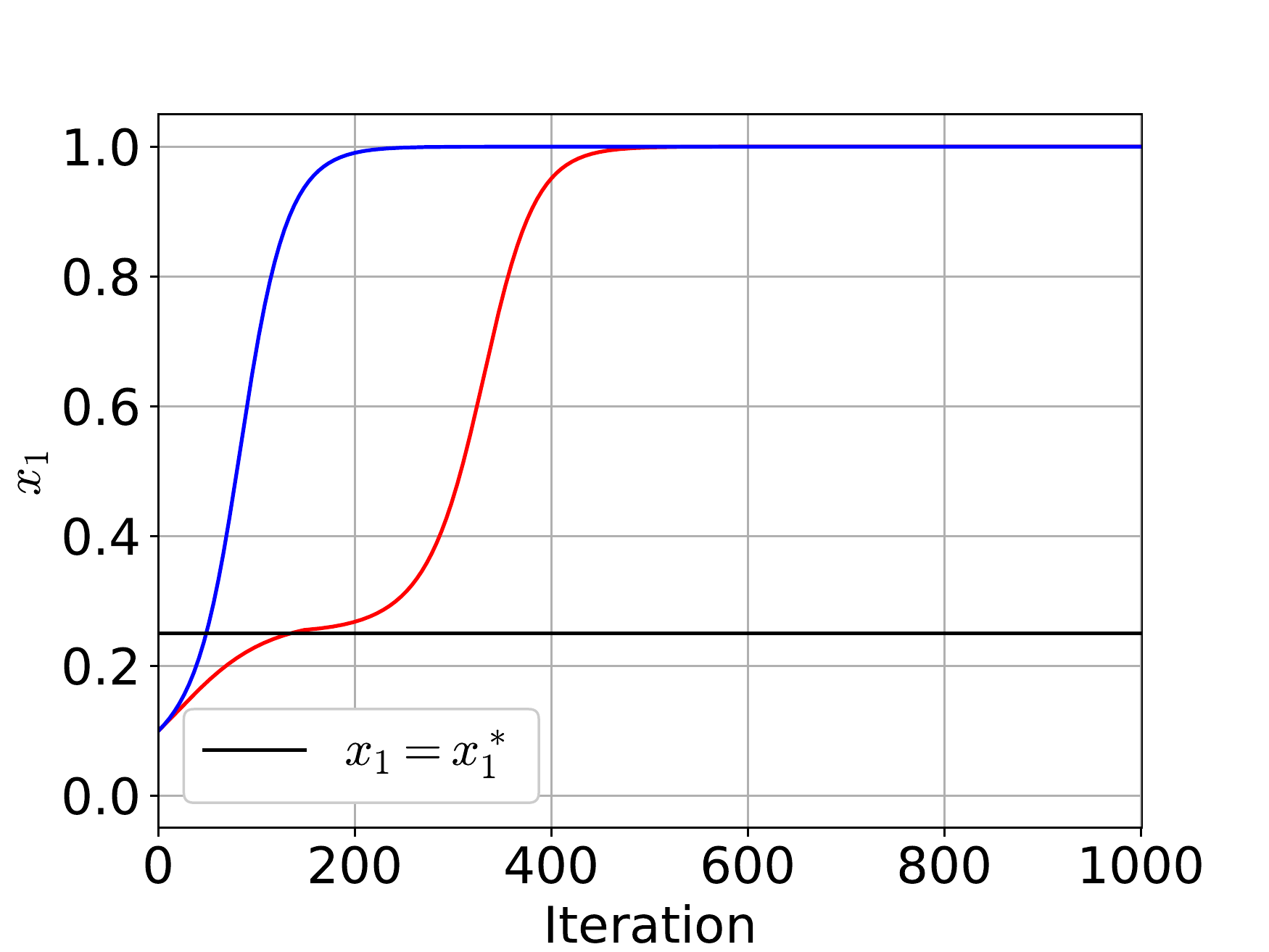}
    \subcaption{}
    \label{fig:trajectory}
  \end{minipage}\\
  \begin{minipage}[b]{0.99\linewidth}
    \centering
    \includegraphics[clip, width = 7.5cm]{./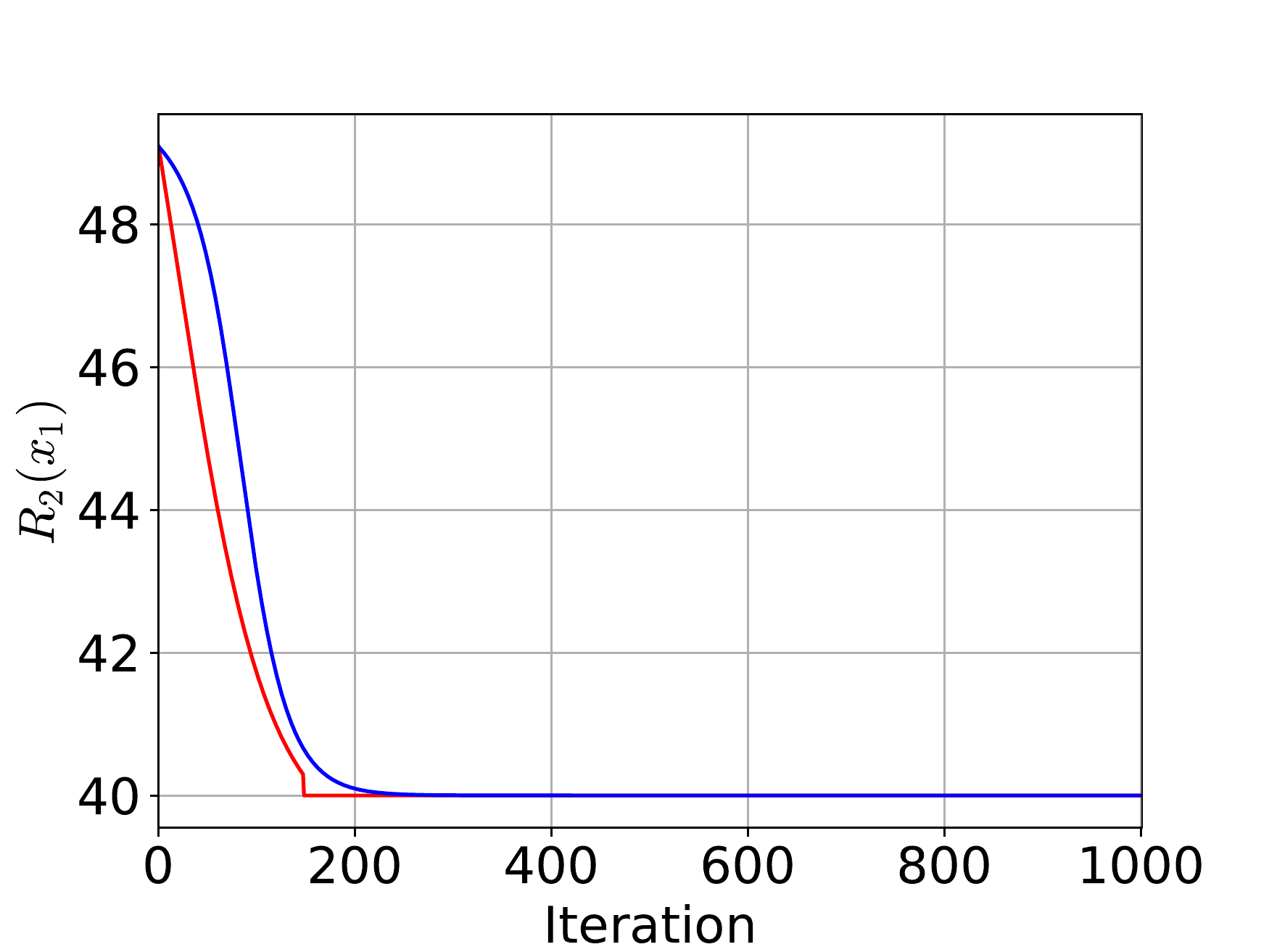}
    \subcaption{}
    \label{fig:with_control_input}
  \end{minipage}
  \caption{Trajectories of (a) the state $x_1$ and (b) the reward $R_2(x_1)$ with  a feedback controller satisfying Proposition~\ref{prop:controller},
  when $m = n = 2$, $d = 100$, $R^{*} = 40$, $x_1^{*} = 0.25$ from $x_1^{\mathrm{init}} = 0.1$, where $\bar{x}_1 = 0.26, \varepsilon = 0.005, K = 56.8125$ (red) and $\bar{x}_1 = 1, \varepsilon = 0.75, K = 10.1$ (blue).}
  \label{fig:analyze_control}
\end{figure}

Fig.~\ref{fig:analyze_control} shows trajectories of the state and the reward.
The red and blue lines represent the trajectories of Cases 1) and 2), respectively.
In Case 1), it takes a longer time than Case 2) for the state $x_1$ to converge to $1$, but the reward $R_2(x_1)$ returns to the original value $R^{*}$ quickly.
Note that $R_2(x_1)$ in Case 1) is not continuous because we switch the input $\Delta R(x_1)$ to $0$ when $x_1 = x_1^{*} + \varepsilon$ (see \eqref{eq:controller_delta_2}).
In Case 2), the state $x_1$ converges to $1$ quickly, but it takes longer time than Case 1) for the reward $R_2(x_1)$ to return to its original value $R^{*}$.
Thus, there is a trade-off in the choice of the design parameters $\bar{x}_1$ and $\varepsilon$.

\section{Conclusion}
We proposed a dynamical model of the decision-making of miners in the blockchain.
The proposed model is described by the 1st-order differential equation.
So, it is simple but its theoretical analysis gives an insight into the characteristics of the decision-making.
We analyzed the stability of its equilibrium points.
We showed the occurrence of the transcritical bifurcations and observed a hysteresis phenomenon.
We also proposed a feedback controller and showed that it can stabilize the state where all miners participate in the mining from any non-zero initial participation ratio of the miners.
Our future work is to extend our model to the case where miners' computational performances are different from each other.

\section*{Acknowledgements}
This research was supported by JST ERATO JPMJER1603.

\appendix
\section{Transcritical bifurcation}\label{sec:Appendix_A}
We consider the following system.
\begin{align}
  \dot{x} = f(x, \mu), \; \; x \in \mathbb{R}, \; \; \mu \in \mathbb{R}. \label{eq:A_dynamics}
\end{align}
We assume that
\begin{align}
  f(x, \mu) = x F(x, \mu),  \label{eq:A_dynamics_F}
\end{align}
where $F: \mathbb{R} \times \mathbb{R} \to \mathbb{R}$ satisfies the following condition.
\begin{align}
  F(x, \mu) \coloneqq \begin{cases}
    \frac{f(x, \mu)}{x} & x \neq 0, \\
    \frac{\partial f(0, \mu)}{\partial x} & x = 0.
\end{cases} \label{eq:A_def_F}
\end{align}
Then, it is shown in \cite{wiggins2003intro} that \eqref{eq:A_dynamics} undergoes a transcritical bifurcation at $(x, \mu) = (0, 0)$ if the following three conditions hold.
\begin{description}
  \item[(T1)] $f(0, 0) = 0, \; \frac{\partial f(0, 0)}{\partial x} = 0$,
  \item[(T2)] $\frac{\partial f(0, 0)}{\partial \mu} = 0$,
  \item[(T3)] $\frac{\partial^2 f(0, 0)}{\partial x \partial \mu} \neq 0, \; \frac{\partial^2 f(0, 0)}{\partial x^2} \neq 0$.
\end{description}
Thus we will show that \eqref{eq:dynamics_x1} satisfies the above three conditions at $(x_1, R) = (0, d/m), \ (1, d/(m + n))$.

\subsection{Case where $(x_1, R) = (0, d/m)$}\label{appendix:A_1}
First, we consider the following coordination transformation by which $(x_1, R)=(0, d/m)$ is transformed to $(x, \mu)=(0,0)$.
\begin{align}
  \left(
  \begin{array}{c}
    x_1 \\
    R
  \end{array}
  \right) = \left(
  \begin{array}{c}
    x \\
    \mu
  \end{array}
  \right) + \left(
  \begin{array}{c}
    0 \\
    \frac{d}{m}
  \end{array}
  \right). \label{eq:A_transform_1}
\end{align}
Then, we define
\begin{align}
  f(x, \mu) &\coloneqq \varphi_{\mu + \frac{d}{m}}(x) \nonumber \\
  &= \frac{x(1 - x)}{m + nx} \left( \mu + \frac{d}{m} - \frac{d}{m + nx} \right) = xF(x, \mu), \label{eq:A_def_phi_0}
\end{align}
where the function $F$ is defined by
\begin{align}
  F(x, \mu) \coloneqq \frac{1 - x}{m + nx} \left(\mu + \frac{d}{m} - \frac{d}{m + nx}\right). \label{eq:A_def_large_phi}
\end{align}
Then, we have
\begin{align}
  \frac{\partial f(x, \mu)}{\partial x} &= F(x,\mu) + x \frac{\partial F(x, \mu)}{\partial x} \nonumber \\
   &= \left(- \frac{x}{m + nx} + \frac{1 - x}{m + nx} - \frac{nx(1 - x)}{(m + nx)^2} \right) \nonumber \\
  &\hspace{10mm}\times \left(\mu + \frac{d}{m} - \frac{d}{m + nx}\right) + \frac{dnx(1 - x)}{(m + nx)^3}. \label{eq:A_phi_deriv_x}
\end{align}
It is obvious that
\begin{align}
  F(x, \mu) = \frac{f(x, \mu)}{x} \label{eq:A_Phi_0_rewrite_not_zero}
\end{align}
when $x \neq 0$ and
\begin{align}
  F(0, \mu) = \frac{\mu}{m} = \frac{\partial f(0, \mu)}{\partial x} \; \; (\because \eqref{eq:A_phi_deriv_x}) \label{eq:A_Phi_0_rewrite_zero}
\end{align}
when $x = 0$.
Thus, the function $f$ defined by \eqref{eq:A_def_phi_0} satisfies \eqref{eq:A_dynamics_F} and \eqref{eq:A_def_F}.

Next, we show that $f(x, \mu)$ satisfies the conditions (T1) -- (T3).
We obtain
\begin{align}
  &\frac{\partial f(x, \mu)}{\partial \mu} = \frac{x(1 - x)}{m + nx}, \label{eq:A_dphi_R} \\
  &\frac{\partial^2 f(x, \mu)}{\partial x \partial \mu} = -\frac{x}{m + nx} + \frac{1 - x}{m + nx} - \frac{nx(1 - x)}{(m + nx)^2}, \label{eq:A_ddphi_xR} \\
  &\frac{\partial^2 f(x, \mu)}{\partial x^2} = \left(\frac{n^2x(1 - x)}{(m + nx)^2} - \frac{n(1 - x)}{m + nx} + \frac{nx}{m + nx} - 1\right) \nonumber \\
  &\hspace{5mm}\times \frac{2}{m + nx}\left(\mu + \frac{d}{m} - \frac{d}{m + nx}\right) + \frac{2dn}{(m + nx)^3} \nonumber \\
  &\hspace{10mm}\times \left(\frac{-2nx(1 - x)}{m + nx} - x + (1 - x)\right). \label{eq:A_ddphixx}
\end{align}
Thus, $f(x, \mu)$ satisfies the conditions (T1) -- (T3) because
\begin{align}
  f(0, 0) &= 0, \label{eq:A_0_1} \\
  \frac{\partial f(0, 0)}{\partial x} &=  0, \label{eq:A_0_2} \\
  \frac{\partial f(0, 0)}{\partial \mu} &= 0, \label{eq:A_0_3} \\
  \frac{\partial^2 f(0, 0)}{\partial x \partial \mu} &= \frac{1}{m} \neq 0, \label{eq:A_0_4} \\
  \frac{\partial^2 f(0, 0)}{\partial x^2} &= \frac{2dn}{m^3} \neq 0. \label{eq:A_0_5}
\end{align}

\subsection{Case where $(x_1, R) = (1, d/(m + n))$}
It is noted that the dynamics of $x_0$ is written as follows.
\begin{align}
  \dot{x}_0 &= - \frac{x_0(1 - x_0)}{m + n(1 - x_0)} \left(R - \frac{d}{m + n(1 - x_0)}\right) \nonumber \\
  &= -\varphi_R(1 - x_0) \label{eq:A_x_0_dyn}
\end{align}
because $x_0$ and $x_1$ satisfies \eqref{eq:relation_x0x1}.
Thus, it is sufficient to show that \eqref{eq:A_x_0_dyn} undergoes a transcirtical bifurcation at $(x_0, R) = (0, d/(m + n))$.
We consider the following coordination transformation by which $(x_0, R) = (0, d/(m + n))$ is transformed to $(x, \mu) = (0, 0)$.
\begin{align}
  \left(
  \begin{array}{c}
    x_0 \\
    R
  \end{array}
  \right) =
   \left(
  \begin{array}{c}
    x \\
    \mu
  \end{array}
  \right) + \left(
  \begin{array}{c}
    0 \\
    \frac{d}{m + n}
  \end{array}
  \right). \label{eq:A_transform_2}
\end{align}
Then, we define
\begin{align}
  f(x, \mu) &\coloneqq - \varphi_{\mu + \frac{d}{m + n}}(1 - x) \nonumber \\
  &= - \frac{x(1 - x)}{m + n(1 - x)} \nonumber \\
   &\hspace{10mm} \times \left( \mu + \frac{d}{m + n} - \frac{d}{m + n(1 - x)} \right) \nonumber \\
   &= xF(x, \mu), \label{eq:A_phi_1_x_0}
\end{align}
where the function $F$ is defined by
\begin{align}
  F(x, \mu) &\coloneqq - \frac{1 - x}{m + n(1 - x)} \nonumber \\
   &\hspace{10mm} \times \left( \mu + \frac{d}{m + n} - \frac{d}{m + n(1 - x)} \right). \label{eq:A_def_large_phi_1}
\end{align}
Then, we have
\begin{align}
  \frac{\partial f(x, \mu)}{\partial x} &= -\left( \frac{-x + (1 - x)}{m + n(1 -  x)} + \frac{nx(1 - x)}{(m + n(1 - x))^2} \right) \nonumber \\
  &\hspace{5mm}\times \left(\mu + \frac{d}{m + n} - \frac{d}{m + n(1 - x)}\right) \nonumber \\
  &\hspace{10mm}+ \frac{dnx(1 - x)}{(m + n(1 - x))^3}.    \label{eq:A_phi_1_deriv_x}
\end{align}
It is obvious that
\begin{align}
  F(x, \mu) = \frac{f(x, \mu)}{x} \label{eq:A_Phi_1_rewrite_not_zero}
\end{align}
when $x \neq 0$ and
\begin{align}
  F(0, \mu) = -\frac{\mu}{m + n} = \frac{\partial f(0, \mu)}{\partial x} \; \; (\because \eqref{eq:A_phi_1_deriv_x}) \label{eq:A_Phi_1_rewrite_zero}
\end{align}
when $x = 0$.
Thus, the function $f$ defined by \eqref{eq:A_phi_1_x_0} satisfies \eqref{eq:A_dynamics_F} and \eqref{eq:A_def_F}.

In the same way as Appendix~\ref{appendix:A_1}, we obtain the partial derivatives of $f(x, \mu)$ and show that $f(x, \mu)$ defined by \eqref{eq:A_x_0_dyn} satisfies the conditions (T1) -- (T3) because
\begin{align}
  f(0, 0) &= 0, \label{eq:A_A_1} \\
  \frac{\partial f(0, 0)}{\partial x} &= 0, \label{eq:A_A_2} \\
  \frac{\partial f(0, 0)}{\partial \mu} &= 0, \label{eq:A_A_3} \\
  \frac{\partial^2 f(0, 0)}{\partial x \partial \mu} &= -\frac{1}{m + n} \neq 0, \label{eq:A_A_4} \\
  \frac{\partial^2 f(0, 0)}{\partial x^2} &= \frac{2dn}{(m + n)^3} \neq 0. \label{eq:A_A_5}
\end{align}

Therefore, \eqref{eq:dynamics_x1} undergoes transcritical bifurcations at $(x_1, R) = (0, d/m), \ (1, d/(m + n))$.

\bibliographystyle{IEEEtran}
\bibliography{ref}

\end{document}